\def\BibTeX{{\rm B\kern-.05em{\sc i\kern-.025em b}\kern-.08em
		T\kern-.1667em\lower.7ex\hbox{E}\kern-.125emX}}
\newcommand{\cN}{\mathcal{N}}
\newcommand{\cT}{\mathcal{T}}
\newcommand{\cH}{\mathcal{H}}
\newcommand{\cA}{\mathcal{A}}
\newcommand{\bB}{\mathbf{B}}
\newcommand{\hv}{\hat{v}}
\newcommand{\chv}{\hat{V}}
\newcommand{\hhv}{\tilde{v}}
\newcommand{\vc}[1]{{#1}}
\newcommand{\mat}[1]{\mathbf{#1}}
\newtheorem{theorem}{{Theorem}}
\title{\LARGE \bf
Linear Equilibria for Dynamic LQG Games with Asymmetric Information and Dependent Types
}
\author{Nasimeh Heydaribeni and Achilleas Anastasopoulos%
\thanks{This work was supported in part by NSF Grant ECCS-1608361.}
\thanks{The authors are with the Department of Electrical Engineering and Computer Science, University of Michigan, Ann Arbor, MI, 48105 USA {\tt\small {heydari,anastas}@umich.edu}}
}
\begin{document}

	\maketitle
\thispagestyle{empty}
\pagestyle{empty}
	

	\maketitle
	
	\begin{abstract}
We consider a non-zero-sum linear quadratic Gaussian (LQG) dynamic game with asymmetric information.
Each player observes privately a noisy version of a (hidden) state of the world $V$, resulting in dependent private observations. We study perfect Bayesian equilibria (PBE) for this game with equilibrium strategies that are linear in players' private estimates of $V$.
The main difficulty arises from the fact that players need to construct estimates on other players' estimate on $V$, which in turn would imply that an infinite hierarchy of estimates on estimates needs to be constructed, rendering the problem unsolvable.
We show that this is not the case: each player's estimate on other players' estimates on $V$ can be summarized into her own estimate on $V$ and some appropriately defined public information.
Based on this finding we characterize the PBE through a backward/forward algorithm akin to dynamic programming for the standard LQG control problem.
Unlike the standard LQG problem, however, Kalman filter covariance matrices, as well as some other required quantities, are observation-dependent and thus cannot be evaluated off-line through a forward recursion.
	\end{abstract}
	
	\begin{keywords}
linear quadratic Gaussian (LQG) games, perfect Bayesian  equilibrium (PBE), dynamic games, asymmetric information.
	\end{keywords}
	
	\section{Introduction}

Linear Quadratic Gaussian (LQG) models have been studied extensively for decision and control problems.
In the simplest instance of a single centralized controller it is well known that there is separation of estimation and control, posterior beliefs of the state are Gaussian, a sufficient statistic for control is the state estimate evaluated by the Kalman filter, the optimal control is linear in the state estimate, and the required covariance matrices can be calculated offline~\cite{KuVa86}.

The LQG model for the case of multiple controllers with different information patterns and single objective has also been studied extensively in the context of dynamic decentralized teams~\cite{ho1972team,yuksel2009stochastic, mahajan2015sufficient}. Although it is known that, in general, linear controllers are not optimal in LQG team problems~\cite{witsenhausen1968counterexample}, some information structures have been identified for which linear controllers are shown to be optimal~\cite{yuksel2009stochastic}.

In order to capture the strategic behavior of agents, which is an important aspect of today's  extensive networks \cite{HeAn18,heydaribeni2018distributed,HeAn18j}, LQG models have also been considered in the context of dynamic games. There is extensive literature on dynamic LQG games with asymmetric information, each work considering a different information structure, such as delayed observation sharing~\cite{basar1978two,basar1978decentralized}, or no access to other agents' observations~\cite{altman2009stochastic}, to name a few.  The appropriate solution for such problems is some notion of equilibrium such as Markov perfect equilibrium, Bayesian Nash equilibrium, perfect Bayesian equilibrium (PBE), sequential equilibrium, etc.~\cite{OsRu94, FuTi91,maskin2001markov,watson2017general}.
In dynamic games, due to the complexity of finding equilibrium strategies with increasing domains, researchers consider summaries of the agents' histories into time-invariant objects and define structured equilibria. 
For LQG models in particular, linear structured equilibria have been considered~\cite{gupta2014common,farokhi2014gaussian,VaAn16a,sayin2018dynamic}.

A broad classification of the relevant literature can be based on whether there is symmetric or asymmetric information among agents, whether a two-stage or a multi-stage game is considered, and whether the equilibrium concept used guarantees ``perfection'', i.e., sequential rationality at every possible (or impossible) information pattern.
Authors in~\cite{GuNaLaBa14b} have considered a multi-stage game with a special information structure enabling them to characterize a non-signaling Markov perfect equilibrium, which is a solution concept for symmetric information patterns. In~\cite{VaAn16a}, authors have considered a multi-stage game and characterized a signaling equilibrium which is linear in agents' private observations. In addition, a backward sequential decomposition was presented for the construction of the equilibrium, based on the general development in~\cite{VaSuAn15arxiv,vasal2018systematic}.
A number of works consider LQG games where information available to some players is affected by the decision of others.
The works of~\cite{crawford1982strategic} on strategic information transmission, and~\cite{farokhi2014gaussian} on Gaussian cheap talk consider two-stage games and focus on Bayesian Nash equilibria.
The classic work on Bayesian persuasion~\cite{kamenica2011bayesian}, and the related one on strategic deception~\cite{sayin2018dynamic} consider two-stage and multi-stage games, respectively, and focus on (sender preferred) subgame perfect equilibria owing to the fact that strategies of the sender are observed.

In this paper, we study a dynamic LQG non-zero-sum game with asymmetric information. We consider a model with an unknown Gaussian state of the world $V$, where each player $i$ has a private noisy observation $X^i_t$ of it at each time $t$. The private observations of players are conditionally independent given $V$. Our model closely follows that of~\cite{VaAn16a} with one important difference: the private observations of players in~\cite{VaAn16a} are independent where in our case, they are dependent through $V$; in particular they are conditionally independent given $V$. This model can also be thought of as a generalization of the one in~\cite{BiHiWe92} where $V$ models the value of a product (or a technology) and agents receive a noisy private signal about it and decide whether to adopt it or not, with the important difference that we allow multiple agents to act simultaneously and, unlike~\cite{BiHiWe92}, we also allow them to return to the marketplace at each time instance.
We hypothesize (and eventually prove) structured PBE with strategies for user $i$ being linear in $\hat{V}^i_t$, the private estimate of $V$ by user $i$, generated by a (private) Kalman filter.

What makes the considered model interesting and more challenging compared to previous works is that we need to deal with private beliefs while in most of the existing models, the beliefs are either public (e.g.,~\cite{gupta2014common,VaAn16a,vasal2018systematic}), or there is a public belief that can be easily augmented by the players' private signals to form the private beliefs (e.g.,~\cite{BiHiWe92,BiAn18,BiHeAn19j}), or even there are no beliefs to be defined due to the equilibrium concept used~\cite{crawford1982strategic,farokhi2014gaussian,kamenica2011bayesian,sayin2018dynamic}.

In order to intuitively explain the conceptual difficulty arising from the above dependent-signal model, consider the following thought process. If a player acts according to her private estimate $\hat{V}^i_t$ of the hidden variable $V$ and she expects other players to behave in the same way, she needs to form a belief over other players' beliefs to interpret and predict their actions and she has to take that belief into account when acting. In other words, she has to form a belief over (at least) $\hat{V}^j_t$ for all other users $j\neq i$. This is also a form of a belief on beliefs which is also a private information of user $i$ and it has to be taken into account in her strategies, and one may expect that in the simplest case this will reduce to user $i$'s estimates $\tilde{V}^{i,j}_t$ of $\hat{V}^j_t$. Due to symmetry of the information structure, all other players should do the same. But now, it is clear that user $i$ needs to form beliefs over beliefs over beliefs of other players.
In the simplest case this would involve user $i$'s private estimates $\tilde{V}^{i,j,k}_t$ of the estimates $\tilde{V}^{j,k}_t$.
This chain continues as long as this hierarchy of beliefs are private. It stops whenever the beliefs in one step are public or public functions of previous step beliefs.
One of the main contributions of this paper is to show that, due to the conditional independence of the private signals given $V$, this chain stops at the second step and players estimations over the estimations of others, $\tilde{V}^{i,j}_t$, are public linear functions of their own estimations (the first step beliefs), $\hat{V}^i_t$.

Once the above task is accomplished, we show that the equilibrium strategies can be characterized by an appropriate backward sequential decomposition algorithm akin to dynamic programming. The main difference from the standard stochastic control LQG framework is that the forward recursion that evaluates covariance matrices cannot be performed separately as it depends on the equilibrium strategies. This was also the case in~\cite{VaAn16a}.
A unique feature of this work is the requirement to update in a forward manner additional quantities that are observation dependent (public actions). This precludes off-line evaluation of these forward-updated quantities  and necessitates their inclusion as part of the state of the above mentioned backward sequential decomposition.
This is the second main contribution of this work.

The remaining part of the paper is structured as follows. In Section \ref{section2}, the model is described. Section \ref{section3} is a review over the solution concept that we have considered in this paper. We have developed our main results in Section \ref{section4}. Section \ref{section5} summarizes the arguments in Section \ref{section4} into two algorithms and we conclude in Section \ref{section6}.

\subsection{Notation}
		We use upper case letters for scalar and vector random variables and lower case letters for their realizations. Bold upper case letters are used to denote matrices. Subscripts denote time indices and superscripts represent player identities. The notation $-i$ denotes the set of all players except $i$. All vectors are column vectors. The transpose of a matrix $\mat{A}$ (or vector) is denoted by $\mat{A}'$.  We use  semicolons $``;"$ for vertical concatenation of matrices (or vectors). For any vector (or matrix) with time and player indices, $a^i_t$ (or $\mat{A}^i_t$), $a^{-i}_t$ denotes the vertical concatenation of vectors (or matrices) $a^1_t,a^2_t,\hdots,a^{i-1}_t,a^{i+1}_t,\hdots$. Further, $a^i_{1:t}$ means  $(a^i_1,a^i_2,\hdots,a^i_t)$. In general, for any vector with time and player indices, $a^i_t$, we remove the superscript to show the vertical concatenation of the whole vectors and we remove the subscript to show the set of all vectors for all times. The notations $\mat{0}$ is  used to show the matrix of all zeros with appropriate dimension and $\mat{I}$ denotes the identity matrix of appropriate dimension. For two matrices $\mat{A}$ and $\mat{B}$, $\mathfrak{D}(\mat{A},\mat{B})$ represents the block diagonal concatenation of these matrices, i.e., $\left[\begin{array}{cc}
		\mat{A} & \mat{0} \\ \mat{0} & \mat{B}
		\end{array}\right]$ (it applies for any number of matrices). By $\mathfrak{D}(\mat{A}^{-i})$, we mean the block diagonal concatenation of matrices $\mat{A}^j$ for $j \in -i$. Further, $quad(A;B)$ represents $B'AB$.
		For the equation $\left[\begin{array}{ccc}\tilde{a} \ ;& \tilde{b} \ ;&  \tilde{c}
		\end{array}\right]=\mat{A}\left[\begin{array}{ccc}a  \ ;&  b \ ;&  c
		\end{array}\right]$, the notation $(\mat{A})_{\tilde{a},b}$ denotes the submatrix of $\mat{A}$ corresponding to rows $\tilde{a}$ and columns $b$. We use $``:"$ for either of the row or column subscripts to indicate the whole rows or columns.
		We use $\delta(\cdot)$ for the Dirac delta function. For any Euclidean set $\mathcal{S}$, $\Delta(\mathcal{S})$ represents the space of all probability measures on $\mathcal{S}$.

\section{Model} \label{section2}
	
	We consider a discrete time dynamic system with $\cN=\{1,2,...,N\}$ strategic players over a finite time horizon $\cT=\{1,2,...,T\}$.   There is a static unknown state of the world $V\sim N(\mat{0},\mat{\Sigma})$ with size $N_v$. Each player has a private noisy observation $X^i_t$ of $V$ at every time step $t\in\cT$
	\vspace{0.1cm}
	\begin{equation}
	x^i_t=v+w^i_t,
	\end{equation}
	where $W^i_t \sim N(\mat{0},\mat{Q}^i)$ and all of the noise random vectors $W^i_t$ are independent across $i$ and $t$ and also of $V$. The values of $\mat{\Sigma}$ and $\mat{Q}^i, \ \forall i \in \cN$ are common knowledge between players. Also, we assume that players have perfect recall. At time $t$, player $i$ takes action $a^i_t \in \cA=\mathbb{R}^{N_a}$ which is observed publicly by all players. We can construct the history of the system as $h_t=(v,x_{1:t},a_{1:t-1}) \in \cH_t $ and the history observed by player $i$ as $h^i_t=(x^i_{1:t},a_{1:t-1}) \in \cH^i_t$. At the end of time step $t$, each player $i$ receives the reward $R^i(v,a_t)$,
	\vspace{0.1cm}
	\begin{equation}
	R^i(v,a_t)=\left[{\begin{array}{cc}\hspace{-0.1cm} v' & a_t' \hspace{-0.1cm} \end{array}}\right]
\bB^i
\left[\begin{array}{c} v \\ a_t \end{array}\right]\hspace{-0.1cm} =quad(\bB^i;\left[\begin{array}{c} v \\ a_t \end{array}\right] ),
	\end{equation}
	where $B^i$ is a symmetric matrix of appropriate dimensions. We assume that the rewards are not observed by the players until the end of the time horizon.
	
 Let $g^i=(g^i_t)_{t \in \cT}$ be a probabilistic strategy of player $i$, where $g^i_t: \cH^i_t \rightarrow \Delta(\cA)$ such that player $i$'s action is generated according to the distribution $g^i_t(\cdot|h^i_t)$. The strategy profile of all players is denoted by $g$. For the strategy profile $g$, player $i$'s total expected reward is
\vspace{0.1cm}
\begin{equation}
J^{i,g}:=\mathbb{E}^g\left\{\sum_{t=1}^TR^i(V,A_t)\right\},
\end{equation}	
and her objective is to maximize her total expected reward.

\section{Solution concept}\label{section3}

	We can model this system as a dynamic game with asymmetric information and an appropriate solution concept for such games is the perfect Bayesian equilibrium (PBE). A PBE consists of a pair $(\beta,\mu)$ (an assessment) of strategy profile $\beta=(\beta^i_t)_{t\in \cT , i \in \cN}$ and belief system  $\mu=(\mu^i_t)_{t\in \cT , i \in \cN}$ where $\mu^i_t : \cH^i_t \rightarrow \Delta(\cH_t)$ satisfies Bayesian updating\footnote{Bayesian updating includes both on and off equilibrium histories.  This condition requires the beliefs to be Bayesian updated given any history, whether that history is on equilibrium or off equilibrium~\cite{watson2017general}.} and sequential rationality holds. For any $i \in \cN, \  t \in \cT, h^i_t \in \cH^i_t, \tilde{\beta}^i$, sequential rationality imposes the following condition:
\begin{equation}
	\begin{aligned}
	\mathbb{E}^{\beta^i\beta^{-i}}_{\mu}\left\{\sum_{n=t}^T R^i(V,A_n)|h^i_t\right\} \geq   \mathbb{E}^{\tilde{\beta}^i\beta^{-i}}_{\mu}\left\{\sum_{n=t}^T R^i(V,A_n)|h^i_t\right\}
	\end{aligned}
	\end{equation}

\section{Structured PBE}\label{section4}
The strategies $g^i_t(\cdot|h^i_t)$ have a domain that is expanding in time. Finding such strategies is complicated with the complexity growing exponentially with the time horizon. For this reason, we consider summaries for $h^i_t \in \cH^i_t$, i.e., $S(h^i_t)$, that are not expanding in time. We are interested in PBEs with strategies, $g^i_t(\cdot|h^i_t)=\psi^i_t(\cdot|S(h^i_t))$, that are functions of $h^i_t$ only through the summaries $S(h^i_t)$. These PBEs are called structured PBEs~\cite{vasal2018systematic}. In contrast to $\cH^i_t$,  the set of summaries does not grow in time and therefore, finding such structured PBEs is less complicated than a general PBE.  According to \cite{vasal2018systematic}, we  can show that players can guarantee the same rewards by playing structured strategies compared to the general non-structured ones. In the dynamic games with asymmetric information, the summaries are usually the belief of players over the unknown variables of the game. In this model, we will show that these beliefs are Gaussian and since any Gaussian belief can be expressed in terms of its mean and covariance matrix, we define the summaries such that they include the mean and covariance matrices of the beliefs of the players over $V$. The mean of each player's belief, i.e., her estimation over $V$, will be her private information. The covariance matrix, however, can be calculated publicly.
Each player, in addition to her own estimate of $V$, needs to interpret actions of others and predict their future actions. Hence, each player needs to have a belief over the estimates of other players on $V$. We will show that this latter belief is also Gaussian and therefore, one needs to keep track of only its mean and covariance. Therefore, for all $ i \in \cN, \ t \in \cT$, we define the following quantities,
\vspace{0.2cm}
\begin{align}
\hv^i_t&=\mathbb{E}[V| h^i_t]=\mathbb{E}[V|x^i_{1:t},a_{1:t-1}] \\
\hhv^{i,j}_t&=\mathbb{E}[\chv^j_t|h^i_t]=\mathbb{E}[\chv^j_t|x^i_{1:t},a_{1:t-1}].
\end{align}	
The quantity $\hv^i_t$ is player $i$'s best estimate of $V$ given her observations up to time $t$. As mentioned before, this quantity is a private estimation for player $i$ and is not measurable with respect to the sigma algebra generated by the observations of any other player $j$. Hence, player $i$ should form an estimation over the private estimations of other players and this is why $\hhv^{i,j}_t$ is defined. This in turn implies that players' strategies should also be a function of their estimations over others' estimations. Hence, the same argument holds about the need to define an estimation over estimations of players over other players' estimations. This argument continues as long as these estimations are private. This chain stops whenever one of the estimations of players is public (or a public function of previous-step private estimations) and therefore, there is no need to form an estimation over it.

Indeed, we will show that $\hhv^{i,-i}_t$ is a linear function of $\hv^i_t$, hence,  there is no need to include $\hhv^{i,-i}_t$ in the summary $S(h^i_t)$ and therefore, no other player needs to form an estimation over it. The summary we use for $h^i_t$ is defined as $S(h^i_t)=(\hv^i_t, P(h^i_t))$, where $P(h^i_t)$ is the public summary for $h^i_t$ and it includes the covariance matrix of player $i$'s belief over $V$ and some other needed quantities that will be subsequently defined.
We are interested in the strategies $A^i_t \sim \psi^i_t(\cdot|\hv^i_t,P(h^i_t))=\gamma^i_t(\cdot|\hv^i_t)$, where $\gamma^i_t=\theta^i_t(P(h^i_t))$. In particular, we want to prove that the linear strategies $\gamma^i_t(a^i_t|\hv^i_t)=\delta(a^i_t-\mat{L}^i_t\hv^i_t-\vc{m}^i_t)$, where $\mat{L}^i_t$ and $\vc{m}^i_t$ are a matrix and a vector with appropriate dimensions and are functions of $P(h^i_t)$, form a PBE of the game.

\subsection{State Evolution}
In order to prove Gaussianity of the beliefs over $V$ and other players' estimations, for each player $i$, we define a state vector that includes $v$ and all of the players' estimations in addition to her private observation. We will use Kalman filter results to update this vector recursively in time and prove Gaussianity and other properties for it. We define the state vector as $s^i_t=\left[\begin{array}{cccc} v \ ; & \hv^i_{t-1} \ ; & \hv^{-i}_{t-1} \ ; & x^i_t \end{array}\right]$, for each player $i \in \cN$.
By deriving the conditional distribution of the state vector $S^i_t$ given the observation of player $i$, we can form her belief over $V$ and other players' estimations $\chv^{-i}_{t-1}$.

In the next theorem, we show that for  $\gamma^j_k(a^j_k|\hv^j_k)=\delta(a^j_k-\mat{L}^j_k\hv^j_k-\vc{m}^j_k), \ \forall k \leq t-1, j \in \cN$,  $\hhv^{i,j}_t$ is a linear function of $\hv^i_t$. Further, the state $s^i_t$ is updated recursively in terms of $s^i_{t-1}$ through a Gauss-Markov model.

\begin{theorem}\label{th1}
For $\gamma^j_k(a^j_k|\hv^j_t)=\delta(a^j_k-\mat{L}^j_k\hv^j_k-\vc{m}^j_k), \ \forall k \leq t-1, j \in \cN$,

\begin{enumerate}[noitemsep,wide=0pt, leftmargin=0pt]
\item[(a)] The random vector $s^i_t=\left[\begin{array}{cccc} v \ ; & \hv^i_{t-1} \ ; & \hv^{-i}_{t-1} \ ; & x^i_t \end{array}\right]$ evolves according to a linear Gaussian process,
\begin{equation}
\left[\begin{array}{c} v\\ \hv^i_t\\ \hv^{-i}_t \\x^i_{t+1} \end{array}\right] = \mat{A}^i_t \left[\begin{array}{c} v\\ \hv^i_{t-1}\\ \hv^{-i}_{t-1} \\x^i_t \end{array}\right]+\mat{H}^i_t \left[\begin{array}{c}
w^{-i}_t \\ w^i_{t+1}
\end{array}\right] +\vc{d}^i_t,
\label{update}
\end{equation}
where
\begin{equation}
\mat{A}^i_t=\left[\begin{array}{c} \begin{array}{cccc}
\mat{I} & \mat{0} & \mat{0} & \mat{0}
\end{array}\\  \mat{G}^{i,i}_t \\   \mat{G}^{i,-i}_t\\ \begin{array}{cccc}
\mat{I} & \mat{0} & \mat{0} & \mat{0}
\end{array}\end{array}\right]
\end{equation}
and  $\mat{G}^{i,i}_t$, $\mat{G}^{i,-i}_t$, $\mat{H}^i_t$ and $\vc{d}^i_t$ are matrices and vector with appropriate dimensions (they will be constructed in the proof).
\item[(b)] The conditional expectation $\mathbb{E}[\chv^{-i}_t|v,a^{-i}_{1:t-1}]$ is a linear function of $v$,
\begin{equation}
\mathbb{E}[\chv^{-i}_t|v,a^{-i}_{1:t-1}]=\mat{E}^i_t v+\vc{f}^i_t,
\label{vhhupdatecond}
\end{equation}
and  $\mat{E}^{i}_t$, and $\vc{f}^i_t$ are a matrix and a vector, respectively, with appropriate dimensions (they will be constructed in the proof).

\end{enumerate}

\end{theorem}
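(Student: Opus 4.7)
The plan is to prove both parts by a simultaneous induction on $t$, leveraging the fact that under the hypothesized linear strategies $\gamma^j_k(a^j_k|\hv^j_k)=\delta(a^j_k-\mat{L}^j_k\hv^j_k-\vc{m}^j_k)$ for all $k\le t-1$, each $a^j_k$ is an affine function of Gaussian observations and hence the full history $(V,x_{1:t},a_{1:t-1},\hv_{1:t-1})$ is jointly Gaussian; consequently every relevant conditional expectation reduces to an affine function of its conditioning variables. The induction also carries an auxiliary claim, used in constructing part (a), that $\hhv^{i,-i}_t$ is affine in $\hv^i_t$ up to a public term.

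For part (a), I would assemble the block update (\ref{update}) row by row. The $v$-row and $x^i$-row are immediate. For the $\hv^i_t$-row I apply the Kalman-filter update at step $t+1$ for player $i$: her new observation vector is $y^i_{t+1}=[x^i_{t+1};\,a^{-i}_t]=[v+w^i_{t+1};\,\mat{L}^{-i}_t\hv^{-i}_t+\vc{m}^{-i}_t]$, an affine function of components of $s^i_{t+1}$; the one-step prediction $\hat y^i_{t+1}=[\hv^i_t;\,\mat{L}^{-i}_t\hhv^{i,-i}_t+\vc{m}^{-i}_t]$ is, by the auxiliary claim, affine in $\hv^i_t$, hence the Kalman update $\hv^i_{t+1}=\hv^i_t+\mat{K}^i_{t+1}(y^i_{t+1}-\hat y^i_{t+1})$ is affine in the components of $s^i_{t+1}$ plus a public offset. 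The $\hv^{-i}_t$-row is obtained by running the same Kalman argument player by player for each $j\ne i$; here the new private noise $w^j_{t+1}$ enters via $x^j_{t+1}=v+w^j_{t+1}$, populating $\mat{H}^i_t$. Stacking these rows yields $\mat{A}^i_t$, $\mat{H}^i_t$, and $\vc{d}^i_t$ in the asserted block form.

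For part (b), joint Gaussianity of $(V,\chv^{-i}_t,a^{-i}_{1:t-1})$ yields
\[\mathbb{E}[\chv^{-i}_t\mid V,a^{-i}_{1:t-1}]=\mat{E}^i_tV+\mat{F}^i_ta^{-i}_{1:t-1}+\vc{c}^i_t,\]
with deterministic $\mat{E}^i_t,\mat{F}^i_t,\vc{c}^i_t$ obtained from the Gaussian conditional-expectation formula applied to the covariances produced by part (a); setting $\vc{f}^i_t:=\mat{F}^i_ta^{-i}_{1:t-1}+\vc{c}^i_t$, a function of the public information already contained in the conditioning, gives (\ref{vhhupdatecond}). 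The auxiliary claim about $\hhv^{i,-i}_t$ is proved via the tower identity $\hhv^{i,-i}_t=\mathbb{E}[\mathbb{E}[\chv^{-i}_t\mid V,a_{1:t-1}]\mid h^i_t]$, which invokes the conditional independence $\chv^{-i}_t\perp x^i_{1:t}\mid V,a_{1:t-1}$; by Gaussianity the inner expectation is affine in $V$ with constant-matrix coefficient, so taking the outer expectation replaces $V$ by $\mathbb{E}[V|h^i_t]=\hv^i_t$ and the claim follows.

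The main technical obstacle is establishing this conditional independence, which reduces to showing that $\{w^j_{1:t-1}\}_{j\in\cN}$ are mutually independent across players given $(V,a_{1:t-1})$. I would prove this by a sub-induction on $s\le t-1$: given $(V,a_{1:s-1})$, each new $a^j_s=\mat{L}^j_s\hv^j_s+\vc{m}^j_s$ is, by the induction hypothesis, a function of $w^j_{1:s}$ alone, so further conditioning on $a_s=(a^1_s,\ldots,a^N_s)$ contributes only per-player information and preserves mutual independence across players. This player-wise decoupling is precisely what distinguishes the dependent-signal model here from independent-signal models, and it is the essential structural property enabling the theorem.
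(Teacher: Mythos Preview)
Your proposal is correct and follows the paper's inductive scheme: prove (a) and (b) simultaneously together with the auxiliary fact $\hhv^{i,-i}_t=\mat{E}^i_t\hv^i_t+\vc{f}^i_t$, using Kalman updates for each player and the tower property backed by conditional independence of private signals given $V$. The chief methodological difference is in part (b): the paper sets up a second ``conditional'' Kalman filter with state $\tilde{s}^i_k=[v;\hv^{-i}_{k-1}]$ and observation $\tilde{y}^i_k=[v;a^{-i}_{k-1}-m^{-i}_{k-1}]$, yielding \emph{recursive} update formulas for $\mat{E}^i_k$ and $\vc{f}^i_k$ of fixed dimension, whereas you invoke joint Gaussianity directly and obtain $\vc{f}^i_t$ as an affine function of the growing vector $a^{-i}_{1:t-1}$. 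Both routes prove the theorem as stated, but only the paper's recursive construction feeds directly into the finite-state backward/forward decomposition developed afterward; your version would need an extra step to recover that recursion. Conversely, your explicit sub-induction establishing that the noise blocks $\{w^j_{1:s}\}_{j\in\cN}$ remain mutually independent across players after conditioning on $(V,a_{1:s-1})$ is more careful than the paper's one-line assertion of this fact. One minor index slip: the $\hv^i_t$-row of~(\ref{update}) calls for the Kalman step at time $t$ with new observations $(x^i_t,a^{-i}_{t-1})$, not at time $t{+}1$ with $(x^i_{t+1},a^{-i}_t)$ as you wrote; the argument is unchanged once the shift is corrected.
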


Before proving this Theorem we note that part $(b)$ of Theorem \ref{th1} implies that the estimation of player $i$ over private estimations of players $-i$, i.e, $\hhv^{i,-i}_t$, is a linear function of $\hv^i_t$,
\begin{equation}
\hhv^{i,-i}_t=\mat{E}^i_t \hv^i_t+\vc{f}^i_t.
\label{vhhupdate}
\end{equation}

\begin{proof}
Equation \eqref{update} is obvious for the first and fourth part of the state ($v$ and $x^i_{t+1}$) by setting $(\mat{H}^i_t)_{x^i_{t+1},:}=\left[\begin{array}{cc} \mat{0} & \mat{I}
\end{array}\right]$, $(\mat{H}^i_t)_{v,:}=\mat{0}$ and $(\vc{d}^i_t)_{vx^i_{t+1}}= \mat{0}$.
We prove all other parts of Theorem \ref{th1} together through induction.
\begin{itemize}[noitemsep,wide=0pt, leftmargin=0pt]
\item Induction basis: for $t=1$, we have $s^i_1=\left[\begin{array}{cccc} v \ ; & \hv^i_0 \ ; & \hv^{-i}_0\ ; & x^i_1 \end{array}\right]=\left[\begin{array}{cccc} v \ ; & \mat{0} \ ;& \mat{0} \ ;  & x^i_1 \end{array}\right]$ and for $t=2$, $s^i_2=\left[\begin{array}{cccc} v \ ; & \hv^i_1 \ ; & \hv^{-i}_1 \ ; & x^i_2 \end{array}\right]$. The definition of $\hv^i_1$ and the fact that the vectors $v$ and $x^i_1$ are jointly Gaussian results in the following \cite[Ch.7]{KuVa86},
\begin{equation}
\begin{aligned}
\hv^i_1&=\mathbb{E}[V|x^i_1]=\mathbb{E}[V]+\mathbb{E}[VX^{i'}_1]{\mathbb{E}[X^i_1X^{i'}_1]}^{-1}(x^i_1-\mathbb{E}[X^i_1])\\ &=\mat{\Sigma}{(\mat{\Sigma}+\mat{Q}^i)}^{-1}x^i_1={\mat{\Sigma}}{(\mat{\Sigma}+\mat{Q}^i)}^{-1}(v+w^i_1).
\label{basisV}
\end{aligned}
\end{equation}
We can also write 
\begin{equation}
\hv^j_1=\mat{\Sigma}{(\mat{\Sigma}+\mat{Q}^j)}^{-1}(v+w^j_1), \quad \forall j \in \cN.
\end{equation}
Therefore, we can derive $\mat{A}^i_1$ (and essentially matrices $\mat{G}^{i,i}_1, \ \mat{G}^{i,-i}_1 $), $\mat{H}^i_1$ and $\vc{d}^i_1$
\begin{align}
&\mat{A}^i_1=\left[\begin{array}{cccc} \mat{I} & \mat{0} & \mat{0} & \mat{0}\\ \mat{0} & \mat{0} & \mat{0} & \mat{\Sigma}{(\mat{\Sigma}+\mat{Q}^i)}^{-1} \\\mat{\Sigma}{(\mat{\Sigma}+\mat{Q}^{-i})}^{-1} & \mat{0} & \mat{0} & \mat{0}  \\ \mat{I} & \mat{0}& \mat{0}& \mat{0}\end{array}\right] \\
&\Rightarrow \mat{G}^{i,i}_1=\left[\begin{array}{cccc}  \mat{0} & \mat{0} & \mat{0} & \mat{\Sigma}{(\mat{\Sigma}+\mat{Q}^i)}^{-1} \end{array}\right]\\
& \mat{G}^{i,-i}_1=\left[\begin{array}{cccc}  \mat{\Sigma}{(\mat{\Sigma}+\mat{Q}^{-i})}^{-1} & \mat{0} & \mat{0} & \mat{0}  \end{array}\right]\\
&\mat{H}^i_1=\left[\begin{array}{cc}  \mat{0} & \mat{0} \\ \mat{0} & \mat{0}\\ \mathfrak{D}(\mat{\Sigma}{(\mat{\Sigma}+\mat{Q}^{-i})}^{-1}) & \mat{0}  \\ \mat{0} & \mat{I}\end{array}\right]\\
&\vc{d}^i_1=\mat{0},
\end{align}
where $\mat{\Sigma}{(\mat{\Sigma}+\mat{Q}^{-i})}^{-1}$ is the vertical concatenation of the matrices $\mat{\Sigma}{(\mat{\Sigma}+\mat{Q}^j)}^{-1}$ for $j \in -i$.
Further, we can derive the estimation of player $i$ over other players' estimations as follows,
\begin{equation}
\begin{aligned}
\hhv^{i,j}_1&=\mathbb{E}[\chv^j_1|x^i_1]=\mathbb{E}[\mat{\Sigma}{(\mat{\Sigma}+\mat{Q}^j)}^{-1}(V+W^j_1)|x^i_1]\\&=\mat{\Sigma}{(\mat{\Sigma}+\mat{Q}^j)}^{-1}\mathbb{E}_V[\mathbb{E}[V+W^j_1|x^i_1,V]|x^i_1]\\&=\mat{\Sigma}{(\mat{\Sigma}+\mat{Q}^j)}^{-1}\mathbb{E}[V|x^i_1]=\mat{\Sigma}{(\mat{\Sigma}+\mat{Q}^j)}^{-1}\hv^i_1,
\end{aligned}
\end{equation}
which means that
\begin{equation}
\begin{aligned}
\mat{E}^i_1&=\mat{\Sigma}{(\mat{\Sigma}+\mat{Q}^{-i})}^{-1}\\
\vc{f}^i_1&=\mat{0}.
\label{basisEF}
\end{aligned}
\end{equation}
This concludes the proof of part $(a)$ and $(b)$ of the theorem for $t=1$.

\item Induction hypothesis: \eqref{update} and \eqref{vhhupdatecond} hold for $t=k-1$ and $k \geq 2$.

\item Induction step:
we first show one important result from the induction hypothesis for part (b) of the theorem. Notice that due to conditional independence of $x^j_{k-1}$'s given $v$ across time and players, and since $\hv^j_{k-1}$ is a function of $x^j_{1:k-1}$ and  $a_{1:k-2}$, and since $a^i_{1:k-2}$ is a function of $x^i_{1:k-2}$ and $a^{-i}_{1:k-3}$, we have
\begin{equation}
\begin{aligned}
\hhv^{i,j}_{k-1}&=\mathbb{E}[\chv^j_{k-1}|x^i_{1:k-1},a_{1:k-2}]\\
&=\mathbb{E}_V[\mathbb{E}[\chv^j_{k-1}|V,x^i_{1:k-1},a_{1:k-2}]|x^i_{1:k-1},a_{1:k-2}]\\
&=\mathbb{E}_V[\mathbb{E}[\chv^j_{k-1}|V,a^{-i}_{1:k-2}]|x^i_{1:k-1},a_{1:k-2}]\\
&= \mathbb{E}_V[\mat{E}^i_{k-1} V+\vc{f}^i_{k-1}|x^i_{1:k-1},a_{1:k-2}]\\
&=\mat{E}^i_{k-1} \mathbb{E}[V|x^i_{1:k-1},a_{1:k-2}]+\vc{f}^i_{k-1}\\
&=\mat{E}^i_{k-1}\hv^i_{k-1}+\vc{f}^i_{k-1}.
\label{cond}
\end{aligned}
\end{equation}

In order to prove the results for $t=k$, by using the induction hypothesis, we form a linear Gaussian model with partial observations and use Kalman filter results~\cite[Ch.7]{KuVa86}. Consider the following stochastic system with
state $s^i_k$, state evolution given by \eqref{update} (for $t=k-1$)
\begin{subequations}
	\vspace{0.1cm}
\begin{equation}
s^i_k = \mat{A}^i_{k-1} s^i_{k-1} +\mat{H}^i_{k-1} \left[\begin{array}{cc}
w^{-i}_{k-1}\\  w^i_k
\end{array} \right]+\vc{d}^i_{k-1},\\
\end{equation}
and observation given by
\vspace{0.1cm}
\begin{equation}
y^i_k=\left[\begin{array}{c} a^i_{k-1}-m^i_{k-1}\\  a^{-i}_{k-1}-m^{-i}_{k-1}\\ x^i_k \end{array}\right]=\mat{C}^i_ks^i_k,
\end{equation}
where
\begin{equation}
\mat{C}^i_k=\left[\begin{array}{cccc}\mat{0} & \mat{L}^i_{k-1} & \mat{0} & \mat{0} \\ \mat{0} &   \mat{0} & \mathfrak{D}(\mat{L}^{-i}_{k-1})   & \mat{0}\\ \mat{0} &
\mat{0}& \mat{0}
 & \mat{I}\end{array}\right].
\end{equation}
\label{LQG}
\end{subequations}

Note that $y^i_{1:k}$ is a shifted version of $h^i_k$.
We denote $\mathbb{E}[{S^i_k|y^i_{1:k}}]$ and $\mathbb{E}[{S^i_k|y^i_{1:k-1}}]$ by $s^i_{k|k}$ and $s^i_{k|k-1}$, respectively. By using standard Kalman filter results~\cite[Ch.7]{KuVa86},
we have
\begin{equation}
\begin{aligned}
&s^i_{k|k}=\mathbb{E}[{S^i_k|x^i_{1:k},a_{1:k-1}}]=\left[\begin{array}{c} \hv^i_k \\ \hv^i_{k-1}\\  \mathbb{E}[{\chv^{-i}_{k-1}|x^i_{1:k},a_{1:k-1}}] \\ x^i_k \end{array}\right]\\
&= \mat{A}^i_{k-1}s^i_{k-1|k-1}+\mat{J}^i_k( y^i_k-\mat{C}^i_ks^i_{k|k-1})+\vc{d}^i_{k-1}\\
& \Rightarrow \hv^i_k=\hv^i_{k-1}+(\mat{J}^i_k)_{\hv^i_k,:}(y^i_k-\mat{C}^i_ks^i_{k|k-1})\\&= \hv^i_{k-1}+(\mat{J}^i_k)_{\hv^i_k,:}\left[\begin{array}{c} a^i_{k-1}-m^i_{k-1}-\mat{L}^i_{k-1}\hv^i_{k-1} \\ a^{-i}_{k-1}-m^{-i}_{k-1}-\mathfrak{D}(\mat{L}^{-i}_{k-1})\hhv^{i,-i}_{k-1} \\x^i_k -\mathbb{E}[X^i_k|x^i_{1:k-1},a_{1:k-2}] \end{array}\hspace{-0.1cm}\right]\hspace{-0.1cm},
\end{aligned}
\end{equation}
where
\begin{equation}
\mat{J}^i_k=\mat{\Sigma}^i_{k|k-1}\mat{C}^{i'}_k(\mat{C}^i_k \mat{\Sigma}^i_{k|k-1} \mat{C}^{i'}_k )^{-1},
\label{J}
\end{equation}
\begin{equation}
\mat{\Sigma}^i_{k|k-1}=\mat{A}^i_{k-1}\mat{\Sigma}^i_{k-1}\mat{A}^{i'}_{k-1}+\mat{H}^i_{k-1} \mathfrak{D}(\mat{Q}^{-i},\mat{Q}^i) \mat{H}^{i'}_{k-1},
\label{condsigma}
\end{equation}
and $\mat{\Sigma}^i_k$ is the covariance matrix of $S^i_k$ conditioned on $h^i_k$ and according to \cite[Ch.7]{KuVa86}, it is derived from the following recursive update equation
\vspace{0.1cm}
\begin{equation}
	\begin{aligned}
	&\mat{\Sigma}^i_{k+1}\hspace{-0.1cm}=\hspace{-0.05cm}(\mat{I}-\mat{J}^i_{k+1}\mat{C}^i_{k+1})(\mat{A}^i_k\mat{\Sigma}^i_k\mat{A}^{i'}_k\hspace{-0.1cm}+\hspace{-0.05cm}\mat{H}^i_k \mathfrak{D}(\mat{Q}^{-i}\hspace{-0.1cm},\mat{Q}^i) \mat{H}^{i'}_k)\\
	&\mat{\Sigma}^i_1=\mathbb{E}[S^i_1S^{i'}_1]-\mathbb{E}[S^i_1 X^{i'}_1](\mathbb{E}[X^i_1 X^{i'}_1])^{-1}\mathbb{E}[S^i_1 X^{i'}_1]'\\
&=\left[
\begin{array}{cccc} \mat{\Sigma} & \mat{0} & \mat{0} & \mat{\Sigma}\\
\mat{0} & \mat{0} & \mat{0} & \mat{0}\\
\mat{0} & \mat{0} & \mat{0} & \mat{0}\\
\mat{\Sigma} & \mat{0} & \mat{0} & \mat{\Sigma}+\mat{Q}^i
\end{array}\right]\\ &-
\left[\hspace{-0.1cm}\begin{array}{c} \mat{\Sigma} \\
\mat{0} \\
\mat{0}\\
\mat{\Sigma}+\mat{Q}^i
\end{array}\hspace{-0.1cm}\right]
\hspace{-0.05cm}(\mat{\Sigma}+\mat{Q}^i)^{-1}\hspace{-0.05cm}\left[\hspace{-0.1cm}\begin{array}{cccc} \mat{\Sigma}' & \mat{0} & \mat{0} &	(\mat{\Sigma}+\mat{Q}^i)'
\end{array}\hspace{-0.1cm}\right].
	\end{aligned}
	\label{Sigma}
	\end{equation}
	Notice that unlike $\hv^i_t$, which is private information of player $i$, the matrix $\mat{\Sigma}^i_t$ is a public quantity due to the independence of equation \eqref{Sigma} to the private observations of player $i$. 

Since we can write 
\begin{equation}
\begin{aligned}
\mathbb{E}[X^i_k|x^i_{1:k-1},a_{1:k-2}]&=\mathbb{E}[V+W^i_k|x^i_{1:k-1},a_{1:k-2}]\\
&=\mathbb{E}[V|x^i_{1:k-1},a_{1:k-2}]=\hv^i_{k-1},
\end{aligned}
\end{equation}
 and according to \eqref{cond},
\begin{equation}
\begin{aligned}
&\hv^i_k=\hv^i_{k-1}+(\mat{J}^i_k)_{\hv^i_k,:}\left[\begin{array}{c} \mat{0} \\ -\mathfrak{D}(\mat{L}^{-i}_{k-1})\mat{E}^i_{k-1} \hv^i_{k-1} \\ x^i_k -\hv^i_{k-1} \end{array}\right]\\
&+ (\mat{J}^i_k)_{\hv^i_k,a^{-i}_{k-1}} (a^{-i}_{k-1}-m^{-i}_{k-1}-\mathfrak{D}(\mat{L}^{-i}_{k-1})\vc{f}^i_{k-1})  \\&
= \mat{G}^{i,i}_k \left[\begin{array}{c} v\\ \hv^i_{k-1}\\ \hv^{-i}_{k-1} \\x^i_k \end{array}\right]+(\vc{d}^i_k)_{\hv^i_k},
\label{vhupdate}
\end{aligned}
\end{equation}
where
\begin{equation}
\begin{aligned}
&(\mat{G}^{i,i}_k)_{:,v\hv^{-i}_{k-1}}=\mat{0}\\ &(\mat{G}^{i,i}_k)_{:,x^i_k}=(\mat{J}^i_k)_{\hv^i_k,x^i_k}\\
&(\mat{G}^{i,i}_k)_{:,\hv^i_{k-1}}\hspace{-0.1cm} =\hspace{-0.05cm} \mat{I}-\hspace{-0.05cm}(\mat{J}^i_k)_{\hv^i_k,a^{-i}_{k-1}}\mathfrak{D}(\mat{L}^{-i}_{k-1})\mat{E}^i_{k-1}\hspace{-0.05cm}-\hspace{-0.05cm}(\mat{J}^i_k)_{\hv^i_k,x^i_k} \\
& (\vc{d}^i_k)_{\hv^i_k}= (\mat{J}^i_k)_{\hv^i_k,a^{-i}_{k-1}} (a^{-i}_{k-1}-m^{-i}_{k-1}-\mathfrak{D}(\mat{L}^{-i}_{k-1})\vc{f}^i_{k-1})\\
&(\mat{H}^i_k)_{\hv^i_k,:}=\mat{0}.
\end{aligned}
\end{equation}
By considering  the dynamic system \eqref{LQG} for each of the  players $-i$, we can write \eqref{vhupdate} for players $-i$. Since $x^{-i}_k$ is not part of $s^i_k$, we can substitute it by $v+w^{-i}_k$ and derive $\mat{G}^{i,j}_k$ and $(\mat{H}^i_k)_{\hv^j_k,:}$ for all $j \in -i$ as follows,
\begin{equation}
\begin{aligned}
&(\mat{G}^{i,j}_k)_{:,v}=(\mat{J}^j_k)_{\hv^j_k,x^j_k} \\ &(\mat{G}^{i,j}_k)_{:,\hv^{-j}_{k-1}x^i_k}=\mat{0}\\
&(\mat{G}^{i,j}_k)_{:,\hv^j_{k-1}} \hspace{-0.1cm}=\hspace{-0.05cm} \mat{I}-\hspace{-0.05cm}(\mat{J}^j_k)_{\hv^j_k,a^{-j}_{k-1}}\mathfrak{D}(\mat{L}^{-j}_{k-1})\mat{E}^j_{k-1}\hspace{-0.05cm}-\hspace{-0.05cm}(\mat{J}^j_k)_{\hv^j_k,x^j_k} \\
& (\vc{d}^i_k)_{\hv^j_k}= (\mat{J}^j_k)_{\hv^j_k,a^{-j}_{k-1}} (a^{-j}_{k-1}-m^{-j}_{k-1}-\mathfrak{D}(\mat{L}^{-j}_{k-1})\vc{f}^j_{k-1})\\
&(\mat{H}^i_k)_{\hv^{-i}_k,:}=\left[\begin{array}{cc}
\mathfrak{D}((\mat{J}^{-i}_k)_{\hv^{-i}_k,x^{-i}_k})& \mat{0}
\end{array}\right].
\end{aligned}
\end{equation}

Therefore, we have derived the matrices $\mat{A}^i_k$, $\mat{H}^i_k$ and $\vc{d}^i_k$ and so \eqref{update} holds for $t=k$.

Next, we prove \eqref{vhhupdatecond} for $t=k$. We use the fact that  observations of players are independent conditioned on $V$ and consider a conditional linear Gaussian model.
Note that the inner expectation in \eqref{cond} is publicly measurable conditioned on $V$. We use this fact to form a conditional model, where the observations are the conditions in the inner expectation in \eqref{cond}, and we derive conditional Kalman filters. Consider the following linear Gaussian model for $t=k$, with
\begin{equation}
\hspace{-3cm}
\begin{aligned}
&\text{state}\\
&\tilde{s}^i_{k}=\left[\begin{array}{c} v\\ \hv^{-i}_{k-1} \end{array}\right], \\
&\text{state evolution}\\
&\tilde{s}^i_{k+1} = \tilde{\mat{A}}^i_k \tilde{s}^i_k +\tilde{\mat{H}}^i_k w^{-i}_k+\tilde{\vc{d}}^i_k,\\
&\text{and observation}\\
&\tilde{y}^i_k=\left[\begin{array}{c}v \\ a^{-i}_{k-1}-m^{-i}_{k-1} \end{array}\right]=\tilde{\mat{C}}^i_ks^i_k,\\
\end{aligned}
\end{equation}
where
\begin{align}
&\tilde{\mat{A}}^i_k=\left[\begin{array}{cc} \mat{I}  & \mat{0} \\ & \hspace{-0.5cm}  \tilde{\mat{G}}^{i,-i}_k \end{array}\right]\\
&\tilde{\mat{G}}^{i,-i}_k =(\mat{G}^{i,-i}_k)_{:,v\hv^{-i}_{k-1}}\\
&\tilde{\mat{C}}^i_k=\left[\begin{array}{cc}\mat{I} &  \mat{0} \\ \mat{0} & \quad \mathfrak{D}(\mat{L}^{-i}_{k-1})  \end{array}\right]\\
&\tilde{\mat{H}}^i_k= (\mat{H}^i_k)_{v\hv^{-i}_k,w^{-i}_k}\\
&\tilde{\vc{d}}^i_k=(\vc{d}^i_k)_{v\hv^{-i}_k}.
\end{align}
By using Kalman filter results and the induction hypothesis we can write
\begin{equation}
\begin{aligned}
&\tilde{s}^i_{k+1|k}=\mathbb{E}[\tilde{S}^i_{k+1}|\tilde{y}^i_{1:k}]=\mathbb{E}[\tilde{S}^i_{k+1}|v,a^{-i}_{1:k-1}]\\
&=\tilde{\mat{A}}^i_k\tilde{s}^i_{k|k-1}+\tilde{\mat{A}}^i_k\tilde{\mat{J}}^i_k(\tilde{y}^i_k-\tilde{\mat{C}}^i_k\tilde{s}^i_{k|k-1})+\tilde{\vc{d}}^i_k\\
& \Rightarrow  \mathbb{E}[\chv^{-i}_k|v,a^{-i}_{1:k-1}]=(\mat{G}^{i,-i}_k)_{:,v}v\\&+(\mat{G}^{i,-i}_k)_{:,\hv^{-i}_{k-1}} \mathbb{E}[\chv^{-i}_{k-1}|v,a^{-i}_{1:k-2}]\\&-(\tilde{\mat{A}}^i_k\tilde{\mat{J}}^i_k)_{\hv^{-i}_k,a^{-i}_{k-1}}\mathfrak{D}(\mat{L}^{-i}_{k-1})\mathbb{E}[\chv^{-i}_{k-1}|v,a^{-i}_{1:k-2}]\\&
+(\tilde{\mat{A}}^i_k\tilde{\mat{J}}^i_k)_{\hv^{-i}_k,a^{-i}_{k-1}}(a^{-i}_{k-1}-m^{-i}_{k-1})+(\vc{d}^i_k)_{\hv^{-i}_k}\\
&=(\mat{G}^{i,-i}_k)_{:,v}v+(\mat{G}^{i,-i}_k)_{:,\hv^{-i}_{k-1}}(\mat{E}^i_{k-1}v+\vc{f}^i_{k-1})\\&-(\tilde{\mat{A}}^i_k\tilde{\mat{J}}^i_k)_{\hv^{-i}_k,a^{-i}_{k-1}} \mathfrak{D}(\mat{L}^{-i}_{k-1})(\mat{E}^i_{k-1}v+\vc{f}^i_{k-1})\\&
+(\tilde{\mat{A}}^i_k\tilde{\mat{J}}^i_k)_{\hv^{-i}_k,a^{-i}_{k-1}}(a^{-i}_{k-1}-m^{-i}_{k-1})+(\vc{d}^i_k)_{\hv^{-i}_k}\\
& = \mat{E}^i_k v+\vc{f}^i_k,
\end{aligned}
\end{equation}
where 
 \begin{equation}
\begin{aligned}
\mat{E}^i_k &=(\mat{G}^{i,-i}_k)_{:,v}+(\mat{G}^{i,-i}_k)_{:,\hv^{-i}_{k-1}} \mat{E}^i_{k-1}\\&+(\tilde{\mat{A}}^i_k\tilde{\mat{J}}^i_k)_{\hv^{-i}_k,a^{-i}_{k-1}}\mathfrak{D}(\mat{L}^{-i}_{k-1})\mat{E}^i_{k-1},\\
\vc{f}^i_k&=((\mat{G}^{i,-i}_k)_{:,\hv^{-i}_{k-1}}-(\tilde{\mat{A}}^i_k\tilde{\mat{J}}^i_k)_{\hv^{-i}_k,a^{-i}_{k-1}}\mathfrak{D}(\mat{L}^{-i}_{k-1}))\vc{f}^i_{k-1}\\&+(\tilde{\mat{A}}^i_k\tilde{\mat{J}}^i_k)_{\hv^{-i}_k,a^{-i}_{k-1}}(a^{-i}_{k-1}-m^{-i}_{k-1})+(\vc{d}^i_k)_{\hv^{-i}_k},
\end{aligned}
\label{EFupdate}
\end{equation}
and
\begin{equation}
\tilde{\mat{J}}^i_k=\tilde{\mat{\Sigma}}^i_{k|k-1}\tilde{\mat{C}}^{i'}_k(\tilde{\mat{C}}^i_k \tilde{\mat{\Sigma}}^i_{k|k-1} \tilde{\mat{C}}^{i'}_k )^{-1},
\label{Jtilde}
\end{equation}
\begin{equation}
\tilde{\mat{\Sigma}}^i_{k|k-1}=\tilde{\mat{A}}^i_{k-1}\tilde{\mat{\Sigma}}^i_{k-1}\tilde{\mat{A}}^{i'}_{k-1}+\tilde{\mat{H}}^i_{k-1} \mathfrak{D}(\mat{Q}^{-i})\tilde{ \mat{H}}^{i'}_{k-1},
\label{condsigmatilde}
\end{equation}
and $\tilde{\mat{\Sigma}}^i_k$ is the covariance matrix of $\tilde{S}^i_k$ conditioned on $\tilde{y}^i_{1:k}$ and is derived from the following recursive update equation
\begin{equation}
\begin{aligned}
&\tilde{\mat{\Sigma}}^i_{k+1}\hspace{-0.1cm}=\hspace{-0.05cm}(\mat{I}-\tilde{\mat{J}}^i_{k+1}\tilde{\mat{C}}^i_{k+1})(\tilde{\mat{A}}^i_k\tilde{\mat{\Sigma}}^i_k\tilde{\mat{A}}^{i'}_k\hspace{-0.1cm}+\hspace{-0.05cm}\tilde{\mat{H}}^i_k \mathfrak{D}(\mat{Q}^{-i}) \tilde{\mat{H}}^{i'}_k)\\
&\tilde{\mat{\Sigma}}^i_1=\mathbb{E}[\tilde{S}^i_1\tilde{S}^{i'}_1]-\mathbb{E}[\tilde{S}^i_1 V'](\mathbb{E}[VV'])^{-1}\mathbb{E}[\tilde{S}^i_1 V']'\\
&=\left[
\begin{array}{cccc} \mat{\Sigma} & \mat{0} \\
\mat{0} & \mat{0} 
\end{array}\right]-
\left[\hspace{-0.1cm}\begin{array}{c} \mat{\Sigma} \\
\mat{0} 
\end{array}\hspace{-0.1cm}\right]
\hspace{-0.05cm}\mat{\Sigma}^{-1}\hspace{-0.05cm}\left[\hspace{-0.1cm}\begin{array}{cccc} \mat{\Sigma}' & \mat{0}
\end{array}\hspace{-0.1cm}\right].
\end{aligned}
\label{Sigmatilde}
\end{equation}
\end{itemize}
\end{proof}

By using Theorem~\ref{th1}, one can form the summary  $S(h^i_t)$ for the specific (linear) strategies of all players as mentioned in the theorem. This will enable us to form an LQG model for player $i$ and prove the optimality of linear strategy for her, given others play linear strategies.

\subsection{Linear Quadratic Gaussian (LQG) model from player $i$'s perspective}
Part $(a)$ of Theorem \ref{th1} implies that $S^i_t$ is a jointly Gaussian random vector conditioned on player $i$'s observation till time $t$, $\forall i \in \cN, t \in \cT$. This implies that the beliefs over $V$ are jointly Gaussian and so players need only keep track of their belief's mean (estimation) and covariance matrices. Furthermore, this theorem implies that a player's belief over other players beliefs is also Gaussian and hence, players need to keep track of their estimation on other players' estimations, i.e., $\hhv$. The important point of Theorem~\ref{th1} is the statement that the estimation of players on others' estimations is a linear function of their own estimation and hence, in order to keep track of the estimation over other players' estimations, a player only needs to keep track of her own estimation over $V$. Therefore,  $\hv^i_t$ is a sufficient statistic for player $i$'s private observations till time $t$.

On the other hand, in the proof of Theorem \ref{th1}, there are three quantities, $\mat{\Sigma}^i_t$, $\mat{E}^i_t$ and $f^i_t$, that are updated recursively as a function of previous strategies and actions. This means they can not be calculated off-line like the covariance matrix in the classic LQG stochastic control problem~\cite[Ch.7]{KuVa86}.
A way to resolve this issue is to consider them as the public summary of $h^i_t$, i.e., $P(h^i_t)$, thus leading to strategies of the form $\psi^i_t(\cdot|\hv^i_t,\mat{\Sigma}^i_t,\mat{E}^i_t,\vc{f}^i_t)=\gamma^i_t(\cdot|\hv^i_t)$.
In particular, we will now show that linear strategies of the form $\gamma^i_t(\cdot|\hv^i_t)=\delta(a^i_t-\mat{L}^i_t\hv^i_t-m^i_t)$ are PBE of the game by showing that if every player $j\in -i$ is playing according to $(\gamma^{-i}_k)_{k\leq t}$ and player $i$ is playing according to $(\gamma^i_{k})_{k\leq t-1}$, then player $i$ faces a standard LQG control model from $t$ onwards. By using the results from \cite[Ch.7]{KuVa86}, we can conclude that player $i$'s optimal strategy is linear in $\hv^i_t$. This is summarized in the following theorem.

\begin{theorem}
	For any $t \in \cT$, if all players $-i$ play according to the strategy $\gamma^{-i}_t(a^{-i}_t|\hv^{-i}_t)=\delta(a^{-i}_t-\mathfrak{D}(\mat{L}^{-i}_t)\hv^{-i}_t-m^{-i}_t)$ and for $k < t$, the strategies of players are linear in $\hv_k$, player $i$ faces an MDP with state $(\hv^i_t,\mat{\Sigma}^i_t,\mat{E}^i_t,\vc{f}^i_t)$.
The reward-to-go functions are updated backwards according to
\vspace{0.1cm}
		\begin{equation}
\begin{aligned}
&J^i_t(\hv^i_t,\mat{\Sigma}^i_t,\mat{E}^i_t,\vc{f}^i_t)=\max_{\tilde{\gamma}^i_t(\cdot|\hv^i_t)}\mathbb{E}^{\gamma^{-i}_t,\tilde{\gamma}^i_t(\cdot|\hv^i_t)}[R^i(V,A_t)\\
 &+J^i_{t+1}(\chv^i_{t+1},\mat{\Sigma}^i_{t+1},\mat{E}^i_{t+1},\vc{f}^i_{t+1})|\hv^i_t,\mat{\Sigma}^i_t,\mat{E}^i_t,\vc{f}^i_t]
\end{aligned}
\label{backV}
		\end{equation}
		and
		\vspace{0.1cm}
		\begin{equation}
	\begin{aligned}
	&\gamma^i_t(\cdot|\hv^i_t)=\arg \max_{\tilde{\gamma}^i_t(\cdot|\hv^i_t)}\mathbb{E}^{\gamma^{-i}_t,\tilde{\gamma}^i_t(\cdot|\hv^i_t)}[R^i(V,A_t)\\
&+J^i_{t+1}(\chv^i_{t+1},\mat{\Sigma}^i_{t+1},\mat{E}^i_{t+1},\vc{f}^i_{t+1})|\hv^i_t,\mat{\Sigma}^i_t,\mat{E}^i_t,\vc{f}^i_t],
	\end{aligned}
	\label{backGamma}
		\end{equation}
where 	$\chv^i_{t+1},\mat{\Sigma}^i_{t+1},\mat{E}^i_{t+1},\vc{f}^i_{t+1}$ are generated from $\chv^i_t,\mat{\Sigma}^i_t,\mat{E}^i_t,\vc{f}^i_t$ using $\gamma_t$.

Further, it is optimal for player $i$ to play according to $\gamma^i_t(\cdot|\hv^i_t)=\delta(a^i_t-\mat{L}^i_t\hv^i_t-m^i_t)$.
\end{theorem}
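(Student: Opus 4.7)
The plan is to show that under the hypothesized linear strategies of players $-i$, player $i$ faces a controlled Markov decision process with state $(\hv^i_t,\mat{\Sigma}^i_t,\mat{E}^i_t,\vc{f}^i_t)$ and value functions that are quadratic in $\hv^i_t$, whose optimizer in $a^i_t$ is affine in $\hv^i_t$. This is a partial-information LQG control problem from player $i$'s perspective, so the argument parallels the derivation in \cite[Ch.~7]{KuVa86}, but with the twist that some ``offline'' quantities of standard LQG are now observation-dependent and must be carried as part of the state.

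First I would verify that the proposed state is controlled-Markov. By \eqref{Sigma} and \eqref{EFupdate}, the public coordinates $\mat{\Sigma}^i_{t+1}, \mat{E}^i_{t+1}, \vc{f}^i_{t+1}$ are deterministic functions of $(\mat{\Sigma}^i_t,\mat{E}^i_t,\vc{f}^i_t)$, the realized $a^{-i}_t$, and the public-summary-measurable $(\mat{L}^{-i}_t,\vc{m}^{-i}_t)$. By \eqref{vhupdate} applied at $t+1$, the private coordinate $\hv^i_{t+1}$ is affine in $(\hv^i_t, x^i_{t+1}, a^{-i}_t)$ with coefficients determined by the public summary. So it suffices that the conditional law of $(A^{-i}_t, X^i_{t+1})$ given $h^i_t$ and $a^i_t$ depends on $h^i_t$ only through $(\hv^i_t,\mat{\Sigma}^i_t,\mat{E}^i_t,\vc{f}^i_t)$. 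Theorem~\ref{th1}(a) tells us $(V,\hv^{-i}_t,X^i_{t+1})$ is jointly Gaussian given $h^i_t$ with covariance determined by $\mat{\Sigma}^i_t$, and Theorem~\ref{th1}(b), via \eqref{vhhupdate}, identifies its conditional mean as affine in $\hv^i_t$ with coefficients $(\mat{E}^i_t,\vc{f}^i_t)$. Substituting $A^{-i}_t=\mathfrak{D}(\mat{L}^{-i}_t)\hv^{-i}_t+\vc{m}^{-i}_t$ preserves this structure, giving the required Markov property; the standard one-shot-deviation principle then delivers \eqref{backV}--\eqref{backGamma}.

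Second, I would do backward induction on $J^i_t$, with the ansatz that $J^i_t$ is a polynomial of degree at most two in $\hv^i_t$ with coefficient matrices measurable with respect to the public summary $(\mat{\Sigma}^i_t,\mat{E}^i_t,\vc{f}^i_t)$. The base case $J^i_{T+1}\equiv 0$ is trivial. In the inductive step, the instantaneous term $\mathbb{E}[R^i(V,A_t)\mid h^i_t,a^i_t]$ is quadratic in $(\hv^i_t,a^i_t)$ because $R^i$ is quadratic and both $V$ and $A^{-i}_t$ are conditionally Gaussian with mean affine in $\hv^i_t$. The continuation term likewise reduces to a quadratic in $(\hv^i_t,a^i_t)$: substituting the affine-in-$a^{-i}_t$ update \eqref{EFupdate} into the quadratic form $J^i_{t+1}$ and using that conditional moments of $(\hv^i_{t+1},a^{-i}_t)$ up to order two are polynomials of degree at most two in $(\hv^i_t,a^i_t)$. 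Maximizing a quadratic in $a^i_t$ with Hessian depending only on the public summary gives a unique affine optimizer $a^i_t=\mat{L}^i_t\hv^i_t+\vc{m}^i_t$, and substituting back preserves quadraticity in $\hv^i_t$, closing the induction.

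The main obstacle is that, unlike in a pure stochastic control LQG problem, the coefficient matrices $(\mat{E}^i_{t+1},\vc{f}^i_{t+1})$ parameterizing $J^i_{t+1}$ are random through their dependence on $a^{-i}_t$. One must check that multiplying these random coefficients by the (also random) $\chv^i_{t+1}$ inside the quadratic form $J^i_{t+1}$ does not inflate the total degree in $(\hv^i_t,a^i_t)$ beyond two once the conditional expectation is taken. The resolution is to observe from \eqref{EFupdate} that the dependence of $(\mat{E}^i_{t+1},\vc{f}^i_{t+1})$ on $a^{-i}_t$ is affine, and to expand the quadratic form carefully so that only pairs of affine-in-$a^{-i}_t$ factors appear; the resulting degree-$\leq 2$ Gaussian monomials in $(A^{-i}_t, X^i_{t+1})$ collapse, under conditional expectation, to polynomials of degree at most two in $(\hv^i_t,a^i_t)$. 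Once this technical check is in place, concavity of the objective in $a^i_t$ (inherited from $\bB^i$ and the positive-definite Kalman innovation covariance) yields the claimed linear optimal policy.
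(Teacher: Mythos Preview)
Your proposal is correct and follows the same high-level route as the paper: use Theorem~\ref{th1} to certify that $(\hv^i_t,\mat{\Sigma}^i_t,\mat{E}^i_t,\vc{f}^i_t)$ is a controlled Markov state for player~$i$, observe that the conditional instantaneous reward is quadratic in $\hv^i_t$, and then invoke LQG theory \cite[Ch.~7]{KuVa86} to obtain an affine optimizer. The paper's own proof is a two-paragraph sketch that simply asserts the quadratic reward structure (citing~\cite{VaAn16a}) and then declares ``player $i$ faces an LQG,'' whereas you carry out the backward induction on $J^i_t$ explicitly and, in particular, confront the degree-inflation issue arising from the $a^{-i}_t$-dependence of $\vc{f}^i_{t+1}$ in \eqref{EFupdate}. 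That technical check is genuinely needed and is not addressed in the paper; your resolution is correct, and can be sharpened by noting from \eqref{EFupdate} that $\mat{E}^i_{t+1}$ and $\mat{\Sigma}^i_{t+1}$ are in fact \emph{independent} of $a^{-i}_t$ (only $\vc{f}^i_{t+1}$ carries affine $a^{-i}_t$-dependence), so that the quadratic coefficient of $J^i_{t+1}$ in $\hv^i_{t+1}$ is deterministic given the time-$t$ public summary, which makes the degree bookkeeping immediate once the inductive hypothesis is strengthened to track which coefficients depend on which public coordinates.
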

	
\begin{proof}
	By using the results from Theorem \ref{th1}, given the strategy profile $\gamma_t$, $(\hv^i_t,\mat{\Sigma}^i_t,\mat{E}^i_t,\vc{f}^i_t)$ forms a Markov chain. Notice that $\chv^i_{t+1},\mat{\Sigma}^i_{t+1},\mat{E}^i_{t+1},\vc{f}^i_{t+1}$ are updated by $\gamma_t$ which is linear and therefore, all  results from Theorem \ref{th1} hold. Further, the expected reward $\mathbb{E}[R^i(V,A_t)|\hv^i_t,\mat{\Sigma}^i_t,\mat{E}^i_t,\vc{f}^i_t]$ can be written as $quad(\tilde{\bB}^i_t;\hv^i_t)+\rho^i_t$~\cite{VaAn16a} for some appropriately defined matrix $\tilde{\bB}^i_t$ and function $\rho^i_t$. Hence, the expected reward is measurable with respect to $(\hv^i_t,P(h^i_t))$. We conclude that player $i$ faces an MDP. Further, since at each time $t \in \cT$, player $i$ faces and MDP with quadratic reward with respect to $\hv^i_t$, she faces an LQG.  We refer to \cite[Ch.7]{KuVa86} to conclude that it is optimal for her to play according to $\gamma^i_t(\cdot|\hv^i_t)=\delta(a^i_t-\mat{L}^i_t\hv^i_t-m^i_t)$, where $\mat{L}^i_t$ and $m^i_t$ are functions of the public summary, $\mat{\Sigma}^i_t,\mat{E}^i_t,\vc{f}^i_t$, and quantities $\mat{L}^{-i}_t$ and $m^{-i}_t$.	
	\end{proof}

	\section{Constructing Structured PBE}\label{section5}
The construction of the mentioned structured PBE is summarized in the following backward/forward sequential decomposition algorithm.

	\subsection{Backward Programming}
	For every $i \in \cN$,
	\begin{itemize}[noitemsep,wide=0pt, leftmargin=0pt]
	\item Set $t=T+1$.
	 \item For every $(\hv^i_t,\mat{\Sigma}^i_t,\mat{E}^i_t,\vc{f}^i_t)$, set $J^i_t(\hv^i_t,\mat{\Sigma}^i_t,\mat{E}_t,\vc{f}_t)=0$.
	\item Set $t=t-1$.
	\item For every $(\hv^i_t,\mat{\Sigma}^i_t,\mat{E}^i_t,\vc{f}^i_t)$, set the value of $J^i_t(\hv^i_t,\mat{\Sigma}^i_t,\mat{E}^i_t,\vc{f}^i_t)$ and $\gamma^i_t(\cdot|\hv^i_t)$ according to \eqref{backV} and \eqref{backGamma}. Set $\psi^i_t(\cdot|\hv^i_t,\mat{\Sigma}^i_t,\mat{E}^i_t,\vc{f}^i_t)=\gamma^i_t(\cdot|\hv^i_t)$.
	\item If $t>1$, go to step 3, otherwise stop.
	\end{itemize}
\subsection{Forward Programming}
For every $i \in \cN$,
\begin{itemize}[noitemsep,wide=0pt, leftmargin=0pt]
	\item Set $t=1$.
	\item According to $x^i_t$, set $\hv^i_t$, $\mat{\Sigma}^i_t$, $\mat{E}^i_t$ and $\vc{f}^i_t$ according to \eqref{basisV}, \eqref{Sigma} and \eqref{basisEF}, respectively.
	\item Set $\gamma^i_t(\cdot|\hv^i_t)=\psi^i_t(\cdot|\hv^i_t, \mat{\Sigma}^i_t, \mat{E}^i_t, \vc{f}^i_t)$.
       \item Set $t=t+1$.
	\item  Update $\hv^i_t$, $\mat{\Sigma}^i_t$, $\mat{E}^i_t$, $\vc{f}^i_t$ according to \eqref{vhupdate}, \eqref{Sigma}, \eqref{EFupdate}.
	\item If $t<T$, go to step 3, otherwise, stop.
\end{itemize}
\section{Conclusion} \label{section6}
In this paper, we studied a dynamic LQG game with asymmetric information and dependent types. We considered linear strategies for players and by using conditional independence of types and Kalman filter results, we proved that beliefs of players are Gaussian. Furthermore, each player's estimate over other players' estimates are public functions of her own estimates. This fact enabled us to construct a summary of players' histories at each time and develop an LQG model from the perspective of each player.  We thus characterized PBE with linear strategies through a sequential backward/forward algorithm.

Future work for this model includes investigation of conditions under which we can have steady state equilibria.
In addition, we are planning to investigate conditions on the problem primitives under which the described sequential decomposition algorithm is guaranteed to have solutions.

\bibliographystyle{IEEEtran}

\begin{thebibliography}{10}
\providecommand{\url}[1]{#1}
\csname url@samestyle\endcsname
\providecommand{\newblock}{\relax}
\providecommand{\bibinfo}[2]{#2}
\providecommand{\BIBentrySTDinterwordspacing}{\spaceskip=0pt\relax}
\providecommand{\BIBentryALTinterwordstretchfactor}{4}
\providecommand{\BIBentryALTinterwordspacing}{\spaceskip=\fontdimen2\font plus
\BIBentryALTinterwordstretchfactor\fontdimen3\font minus
  \fontdimen4\font\relax}
\providecommand{\BIBforeignlanguage}[2]{{%
\expandafter\ifx\csname l@#1\endcsname\relax
\typeout{** WARNING: IEEEtran.bst: No hyphenation pattern has been}%
\typeout{** loaded for the language `#1'. Using the pattern for}%
\typeout{** the default language instead.}%
\else
\language=\csname l@#1\endcsname
\fi
#2}}
\providecommand{\BIBdecl}{\relax}
\BIBdecl

\bibitem{KuVa86}
P.~R. Kumar and P.~Varaiya, \emph{Stochastic systems: estimation,
  identification, and adaptive control}.\hskip 1em plus 0.5em minus 0.4em\relax
  Englewood Cliffs, NJ: Prentice-Hall, 1986.

\bibitem{ho1972team}
Y.-C. Ho \emph{et~al.}, ``Team decision theory and information structures in
  optimal control problems--part i,'' \emph{IEEE Transactions on Automatic
  control}, vol.~17, no.~1, pp. 15--22, 1972.

\bibitem{yuksel2009stochastic}
S.~Yuksel, ``Stochastic nestedness and the belief sharing information
  pattern,'' \emph{IEEE Transactions on Automatic Control}, vol.~54, no.~12,
  pp. 2773--2786, 2009.

\bibitem{mahajan2015sufficient}
A.~Mahajan and A.~Nayyar, ``Sufficient statistics for linear control strategies
  in decentralized systems with partial history sharing,'' \emph{IEEE
  Transactions on Automatic Control}, vol.~60, no.~8, pp. 2046--2056, 2015.

\bibitem{witsenhausen1968counterexample}
H.~S. Witsenhausen, ``A counterexample in stochastic optimum control,''
  \emph{SIAM Journal on Control}, vol.~6, no.~1, pp. 131--147, 1968.

\bibitem{HeAn18}
N.~Heydaribeni and A.~Anastasopoulos, ``Distributed mechanism design for
  unicast transmission,'' in \emph{2018 Information Theory and Applications
  Workshop (ITA)}.\hskip 1em plus 0.5em minus 0.4em\relax San Diego, CA: IEEE,
  Feb. 2018, pp. 1--6.

\bibitem{heydaribeni2018distributed}
------, ``Distributed mechanism design for multicast transmission,'' in
  \emph{2018 IEEE Conference on Decision and Control (CDC)}.\hskip 1em plus
  0.5em minus 0.4em\relax IEEE, 2018, pp. 4200--4205.

\bibitem{HeAn18j}
------, ``Distributed mechanism design for network resource allocation
  problems,'' \emph{IEEE Transactions on Network Science and Engineering},
  2019.

\bibitem{basar1978two}
T.~Basar, ``Two-criteria {LQG} decision problems with one-step delay
  observation sharing pattern,'' \emph{Information and Control}, vol.~38,
  no.~1, pp. 21--50, 1978.

\bibitem{basar1978decentralized}
------, ``Decentralized multicriteria optimization of linear stochastic
  systems,'' \emph{IEEE Transactions on Automatic Control}, vol.~23, no.~2, pp.
  233--243, 1978.

\bibitem{altman2009stochastic}
E.~Altman, V.~Kambley, and A.~Silva, ``Stochastic games with one step delay
  sharing information pattern with application to power control,'' in
  \emph{2009 International Conference on Game Theory for Networks}.\hskip 1em
  plus 0.5em minus 0.4em\relax IEEE, 2009, pp. 124--129.

\bibitem{OsRu94}
M.~J. Osborne and A.~Rubinstein, \emph{A course in game theory}.\hskip 1em plus
  0.5em minus 0.4em\relax MIT press, 1994.

\bibitem{FuTi91}
D.~Fudenberg and J.~Tirole, \emph{Game theory}.\hskip 1em plus 0.5em minus
  0.4em\relax MIT press, 1991.

\bibitem{maskin2001markov}
E.~Maskin and J.~Tirole, ``Markov perfect equilibrium: I. observable actions,''
  \emph{Journal of Economic Theory}, vol. 100, no.~2, pp. 191--219, 2001.

\bibitem{watson2017general}
J.~Watson, ``A general, practicable definition of perfect bayesian
  equilibrium,'' \emph{unpublished draft}, 2017.

\bibitem{gupta2014common}
A.~Gupta, A.~Nayyar, C.~Langbort, and T.~Basar, ``Common information based
  markov perfect equilibria for linear-gaussian games with asymmetric
  information,'' \emph{SIAM Journal on Control and Optimization}, vol.~52,
  no.~5, pp. 3228--3260, 2014.

\bibitem{farokhi2014gaussian}
F.~Farokhi, A.~M. Teixeira, and C.~Langbort, ``Gaussian cheap talk game with
  quadratic cost functions: When herding between strategic senders is a
  virtue,'' in \emph{2014 American Control Conference}.\hskip 1em plus 0.5em
  minus 0.4em\relax IEEE, 2014, pp. 2267--2272.

\bibitem{VaAn16a}
D.~Vasal and A.~Anastasopoulos, ``Signaling equilibria for dynamic {LQG} games
  with asymmetric information,'' in \emph{Proc.~IEEE Conf. on Decision and
  Control}, Dec. 2016, pp. 6901--6908.

\bibitem{sayin2018dynamic}
M.~O. Sayin and T.~Ba{\c{s}}ar, ``Dynamic information disclosure for
  deception,'' in \emph{2018 IEEE Conference on Decision and Control
  (CDC)}.\hskip 1em plus 0.5em minus 0.4em\relax IEEE, 2018, pp. 1110--1117.

\bibitem{GuNaLaBa14b}
A.~Gupta, A.~Nayyar, C.~Langbort, and T.~Ba{\c s}ar, ``Common information based
  {M}arkov perfect equilibria for linear-{G}aussian games with asymmetric
  information,'' \emph{SIAM Journal on Control and Optimization}, vol.~52,
  no.~5, pp. 3228--3260, 2014.

\bibitem{VaSuAn15arxiv}
D.~Vasal, V.~Subramanian, and A.~Anastasopoulos, ``A systematic process for
  evaluating structured perfect {B}ayesian equilibria in dynamic games with
  asymmetric information,'' University of Michigan, Tech. Rep., Aug. 2015.

\bibitem{vasal2018systematic}
D.~Vasal, A.~Sinha, and A.~Anastasopoulos, ``A systematic process for
  evaluating structured perfect bayesian equilibria in dynamic games with
  asymmetric information,'' \emph{IEEE Transactions on Automatic Control},
  vol.~64, no.~1, pp. 81--96, 2018.

\bibitem{crawford1982strategic}
V.~P. Crawford and J.~Sobel, ``Strategic information transmission,''
  \emph{Econometrica: Journal of the Econometric Society}, pp. 1431--1451,
  1982.

\bibitem{kamenica2011bayesian}
E.~Kamenica and M.~Gentzkow, ``Bayesian persuasion,'' \emph{American Economic
  Review}, vol. 101, no.~6, pp. 2590--2615, 2011.

\bibitem{BiHiWe92}
S.~Bikhchandani, D.~Hirshleifer, and I.~Welch, ``A theory of fads, fashion,
  custom, and cultural change as informational cascades,'' \emph{Journal of
  political Economy}, pp. 992--1026, 1992.

\bibitem{BiAn18}
I.~Bistritz and A.~Anastasopoulos, ``Characterizing non-myopic information
  cascades in {B}ayesian learning,'' in \emph{Proc.~IEEE Conf. on Decision and
  Control}, Miami Beach, FL, July 2018.

\bibitem{BiHeAn19j}
I.~Bistritz, N.~Heydaribeni, and A.~Anastasopoulos, ``Characterizing non-myopic
  information cascades in bayesian learning,'' \emph{arXiv preprint
  arXiv:1905.01327}, 2019.

\end{thebibliography}

\end{document}